%
%
%
%
%
\RequirePackage{fix-cm}
\documentclass{svjour3}                     
\smartqed  
\usepackage{graphicx}
\usepackage{amsmath}

\newtheorem{ex}{Example}

%
%
%
%
\journalname{ }
\begin{document}

\title{A connected multidimensional maximum bisection problem}

\author{Zoran Maksimovi\'c} 

\authorrunning{Zoran Maksimovi\'c} 

\institute{Z. Maksimovi\'c \at  University of Defence, Military Academy, Pavla Juri\v{s}i\v{c}a \v{S}turma 33, \\11000 Belgrade, Serbia \email{zoran.maksimovic@gmail.com}
 }

\date{Received: date / Accepted: date}

\maketitle

\begin{abstract}
The maximum graph bisection problem is a well known graph partition problem.
The problem has been proven to be NP-hard. 
In the maximum graph bisection problem it is required that the set of vertices 
is divided into two partition with equal number of vertices, and the weight of the edge cut is maximal.

This work introduces a connected multidimensional generalization of the maximum bisection problem. 
In this problem the weights on edges are vectors of positive numbers rather than numbers and partitions should be connected. 
A mixed integer linear programming formulation is proposed with the proof of its correctness. 
The MILP formulation of the problem has polynomial number of variables and constraints. 
\keywords{Graph bisection \and Mixed integer linear programming \and Combinatorial
optimization}
 \subclass{MSC 90C11 \and MSC 05C70}
\end{abstract}

\section{Introduction}

The maximum bisection problem (MBP) is a well known combinatorial optimization problem.
For a weighted graph $G=(V,E)$ with non-negative weights on the edges and where $|V|$ is an even number, 
the maximum bisection problem consists in finding 
a partition of the set of vertices $V$ in two subsets with the equal cardinality where the sum of weights of the edges 
between the sets is maximal. The maximum bisection can be applied in different fields such 
as VLSI design \cite{slov}, image processing \cite{shij},  compiler optimization \cite{hand}, etc. 

The maximum bisection problem is NP hard as shown in \cite{garr}. 
The complexity of finding optimal and good solutions of the maximum bisection problem 
has given raise to various solution approaches ranging from application algorithms, exact methods to metaheuristics. 

Widely used mathematical formulation with binary variables $x_j$ assigned to each vertex can be presented as 

\begin{flalign*}
\max & \frac{1}{4}\sum_{i,j} w_{ij}(1-x_ix_j)\\
{\rm s.t.\ } &e^T x=0\\
& x^2_j=1,\ \ \ j=1,\ldots ,n\\
\end{flalign*}

where $e\in {R}^n$ is the column vector of all ones, and $^T$ is the transpose operator. It should be noted
that $x_j$ is either $1$ or $-1$ so either $s=\{j|x_j=1\}$ or $s=\{j|x_j=-1\}$. 

This formulation enabled
approximation algorithms based on semidefinite programming, for example in \cite{frieze}, \cite{goemans},
\cite{zwick}, \cite{karish} and \cite{ye}. 
The main goal of these approaches is the performance guarantee so they are not competitive
with other methods for comparison in computational testing.
In paper \cite{hastad} was proved that there is no polynomial approximation algorithm with performance ratio
greater than $\frac{16}{17}$.  

There are several approaches for its exact solving such as 
linear and semidefinite
branch-and-cut methods \cite{armb},
intersection of semidefinite and  polyhedral relaxations \cite{rendl}.

In \cite{armb} is discussed  the minimum graph bisection problem and branch-and-cut approaches
for finding its solution. The problem definition can be described as follows:

Let $G = (V, E)$ be an undirected graph with $V = \{1, \dots , n\}$ and 
edge set $E \subseteq \{\{i, j\} : i, j \in V, i < j\}$. For given vertex weights $f_v \in N \cup \{0\}, v \in V$, and edge costs $w_{i,j} \in R$, $\{i, j\} \in E$, a partition of the vertex set $V$ into two disjoint clusters $S$
and $V\setminus S$ with sizes $f(S) =\sum_{i\in S} f_i \le F$ and $f(V \setminus S) \le F$, 
where $F \in N\cap [\frac12 f(V),f(V)]$,
is called a bisection. Finding a bisection such that the total cost of edges in the
cut $\delta(S) := \{\{i, j\} \in E : i \in S \land j \in V \setminus S\}$ is minimal is the minimum bisection
problem (MB). 

If the function $f$ which represents the weight of nodes is equal to one for all nodes
and $F$ is equal to $\frac12 |V|$ and weights on edges $w_{ij}$ takes negative values
this problem becomes the maximum graph bisection problem. In order to apply brunch-and-cut 
approaches authors in \cite{armb} presented an integer linear programming formulation.

For a selected node $s \in V$ the set of edges can be extended
so that $s$ is adjacent to all other nodes in $V$, where the weights $w$ of new edges is equal to zero.
The extended graph contains a spanning star rooted at $s$. 

Suppose $G$ contains a spanning star rooted at $s$

$$y_{ij} = \left\{
\begin{array}{rl}
	1, &  {\rm if\ } ij {\rm\ is\ in\ the\ cut}\\
	0, & {\rm otherwise,}
\end{array}
\right.
$$

\begin{flalign*}
\min & \sum\limits_{ij} w_{ij}y_{ij}\\
{\rm s.t.\ } &f_s+\sum\limits_{v\ne s} f_v(1-y_{sv})\le F\\
& \sum\limits_{v\ne s} f_v y_{sv}\le F\\
& \sum\limits_{ij\in C\setminus U} y_{ij}+\sum\limits_{ij\in U} (1-y_{ij})\ge 1, {\rm\ \  cycle\ }C\subseteq E,
{\rm odd\ } U\subseteq C\\
& y\in \{0,1\}^E
\end{flalign*}

Semidefinite programming formulation given in \cite{armb} is very similar to the one
already presented in this paper. 
On the basis of large sparse instances coming from VLSI design they
showed the good performance of the semidefinite approach versus the mainstream
linear one.

In the paper \cite{rendl} authors presented a method for finding exact solutions of the Max-Cut problem
based on semidefinite formulation. 
They used semidefinite relaxation combined with triangle inequalities,
which they solve with the bundle method.

Another set of approaches, especially for larger scale instances
are metaheuristics. 
From the wide field of applied metaheuristics
let mention some of them such as: memetic search \cite{wu}, variable
neighbeerhood search \cite{ling}, neural networks \cite{fengmin}, 
deterministing anealing \cite{dang}

Any partition of the node set $V$ in two sets defines a set of edges, that we call a {\it cut}, with ends in different partitions. If a graph has weight on edges, than {\it weight of the cut} 
is defined as the sum of weights of edges in the cut. The problem of finding a partition
of the node set where the weight of the cut is maximal is called a Max-Cut problem. From this
definition it follows that there are no restriction on cardinality of partitions. 
Maximum graph bisection problem is obtained from Max-Cut problem
if it is required that the partitions have equal cardinality. From the definition 
it follows that the Max-Cut problem is a generalization of the maximum graph bisection problem,
and that maximum graph bisection problem can be solved by introducing 
restrictions about cardinality in Max-Cut problem.

In the paper \cite{max1} a multidimensional generalization of maximum bisection problem is proposed, where weights on edges 
instead of numbers are $n$-tuples of positive real numbers. The weight of the cut is the minimum
of sums of  the  coordinates of edge weights. The goal is to find a partition of the set of vertices $V$
in two sets with equal number of vertices and maximal weight of the cut. For $n=1$ we have an ordinary maximum bisection problem. From the fact that maximum bisection problem is $NP$ hard, and that the maximum bisection problem is a special case of the multidimensional maximum bisection problem it follows that multidimensional maximum bisection problem is also $NP$ hard.

The weight of the cut in the multidimensional maximum bisection problem is found by first summing the coordinates  of weight vectors vectors on the edges of the cut. 
After that, the minimum of the sums is determined. Obtained minimum is the weight of the cut. As it can be seen, it is more complex than just summing the weights on the edges of the cut, which is the case in the MBP.

A mixed integer linear
programming formulation with $|V|+|E|$ binary variables and $n+2|E|+1$ constraints is proposed with the proof 
of its correctness. The numerical tests, made on a randomly generated graphs, indicates that the multidimensional
generalization is more difficult to solve than the original problem.

The difficulties of solving this generalization of MBP and inapplicability of solution approaches for classical MBP on generalized problem are discussed in details in \cite{max1}.
Numerical results shown in that paper suggest that this generalization is much harder to solve as can be seen from
comparison of results where dimension $k$ of weight-vectors is equal to 1 and greater then 1.  

In theory and practice it was of interest to consider bisection of graphs where subgraphs are connected. One of the most discussed problem is Maximally Balanced Connected Partition Problem – MBCP, whose formulation can be given as in \cite{matic}:

Let $G=(V, E)$, $V=\{1, 2, \ldots, n\}$  be a connected graph, 
$|E|=m$ and let $w_i$ be weights on vertices. For any subset $V'\subset  V$  the
value $w(V')$ is defined as a sum of weights of all vertices belonging to $V'$ , i.e. $w(V')=\sum_{i\in V'} w_i$. The problem  is to find a partition $(V_1, V_2)$ of $V$ into nonempty disjoint
sets $V_1$ and $V_2$ such that subgraphs of $G$ induced by $V_1$ and $V_2$ are connected and the value $obj(V_1, V_2)=w(V_1)-w(V_2)$ is
minimized. Since subgraphs induced by $V_1$ and $V_2$ are connected they contain spanning trees $T_1(V_1,E'_1)$ and $T_2(V_2,E'_2)$ respectively. Let $p$ and $q$ be arbitrary vertices from $V_1$ and $V_2$ in that order. The spanning trees can be extended in such way that they contain additional vertex 0 with $p$ and $q$ as its only successors, i.e. 
$\overline{E}'_1 = E'_1\cup \{(0,p)\}$,  $\overline{E}'_2 = E'_2\cup \{(0,q)\}$, 
$T=(V,E'_1\cup E'_2)$ and $\overline{T}=(\overline{V},\overline{E}'_1\cup \overline{E}'_2)$ where $\overline{V}=V\cup\{0\}$, $\overline{E}=E\cup \partial E$ and $\partial E=\{(0,i):i\in V\}$.

In order to formulate the MILP model, author in  \cite{matic} introduced variables:

\begin{equation*}
\label{eqxx}
x_{i}=\left\{
\begin{array}{rl}
	1, &  i\in V_1\\
	0, & i \in V_2,	
\end{array}
\right.
\end{equation*}

\begin{equation*}
\label{eqyy}
y_e=\left\{
\begin{array}{rl}
	1, & {\rm\ if\ edge\ } e \in {\overline E}'_1\\
	0, & {\rm\ otherwise,}
\end{array}
\right.
\end{equation*}

\begin{equation*}
\label{eqzz}
z_e=\left\{
\begin{array}{rl}
	1, & {\rm\ if\ edge\ } e \in {\overline E}'_2\\
	0, & {\rm\ otherwise,}
\end{array}
\right.
\end{equation*}

\begin{equation*}
\label{equu}
u_e\in[-n,n], \hfill e\in {\overline E}
\end{equation*}

The MILP formulation for MBCP is:

\begin{equation*}
\label{goalm}
\min -w_{sum}+2\sum\limits_{i=1}^n x_i w_i
\end{equation*}

such that

\begin{equation*}
\label{og1m}
2 \sum\limits_{i=1}^n x_i w_i\ge W,
\end{equation*}

\begin{equation*}
\label{og2m}
y_e \le \frac12 x_{i_e}+\frac12 x_{j_e}, \ \ \ \  (i_e,j_e)=e\in E
\end{equation*}

\begin{equation*}
\label{og3m}
z_e \le 1 - \frac12 x_{i_e} - \frac12 x_{j_e}, \ \ \ \  (i_e,j_e)=e\in E
\end{equation*}

\begin{equation*}
\label{og4m}
y_e \le x_{j_e}, \ \ \ \ \ (i_e,j_e)=e\in \partial E
\end{equation*}

\begin{equation*}
\label{og5m}
z_e \le 1- x_{j_e}, \ \ \ \ \ (i_e,j_e)=e\in \partial E
\end{equation*}

\begin{equation*}
\label{og6m}
u_e \le n\cdot y_e + n\cdot z_e, \ \ \ \ \ e\in \overline{E}
\end{equation*}

\begin{equation*}
\label{og7m}
u_e \ge -n\cdot y_e - n\cdot z_e, \ \ \ \ \ e\in \overline{E}
\end{equation*}

\begin{equation*}
\label{og8m}
\sum\limits_{e:j_e=i} u_e - \sum\limits_{e:i_e=i} u_e = 1,  \ \ \ \ i \in V
\end{equation*}

\begin{equation*}
\label{og9m}
\sum\limits_{e:i_e=0} u_e=n
\end{equation*}

\begin{equation*}
\label{og10m}
\sum\limits_{e\in E} y_e  + \sum\limits_{e\in E} z_e  = n - 2
\end{equation*}

\begin{equation*}
\label{og11m}
\sum\limits_{e\in \partial E} y_e + \sum\limits_{e\in \partial E} z_e = 2
\end{equation*}

\begin{equation*}
\label{og12m}
x_i, y_e, z_e \in \{0,1\}, \ \ \ \ i\in V, e\in \overline{E}
\end{equation*}

\begin{equation*}
\label{og13m}
u_e \in [-n,n], \ \ \ \ e\in \overline{E}
\end{equation*}

where the second and the following constraints are used to ensure connectivity of partitions $V_1$ and $V_2$. 

The first notable theoretical results in analyzing this problem are presented in [1], where the authors proved that
the problem MBCP is NP hard and suggested a simple polynomial time approximation algorithm with a guaranteed
bound 1.072.

MBCP belongs to a wide class of graph partitioning problems and have many applications in different fields of engineering, such as digital signal processing, image processing, managing electric power networks and education. 
Process of controlling and routing in large scale wireless sensor networks is one example, where the network of $N$ clusters is considered, with condition that each cluster corresponds to one cluster head. Division of such network in two balanced subnetworks, with independent optimization of any subnetwork, will simplify the handling process of entire network. 
The network can be represented
as an undirected connected graph, $G=(H, A)$, where $H$ is the set of cluster heads, $H=\{CH_i : i \ldots N\}$ and $A$ is the set of all
undirected links $\{CH_i, CH_j\}$, where $CH_i$ and $CH_j$ are two cluster heads. The objective is to partition G into connected balanced subgraphs and as can be seen this problem is equivalent to  MBCP. In \cite{slama}, the authors adopted the approach proposed in \cite{chleb} and used it to
divide the network of clusters into two smaller, connected sub-networks.

The author of \cite{matic} suggested application in education, where solving MBCP can be useful for finding solutions to practical organizational problems. For example, the
course material can naturally be divided into lessons, where the appropriate difficulty is assigned to each lesson. The connections
between the lessons can be defined by various criteria, like conditionality, analogy or similarity. The idea is to divide
the course material into two disjoint connected sections, so that the sections are of similar difficulty, as much as possible.
Another example would be partitioning the group of students into two smaller groups. The "connectivity" between two students
can be defined in several ways, for example, as "the ability to work together". The objective should be to divide a student
group into two smaller, having in mind that groups should be balanced by student abilities.

\section{Problem definition} 

In this section it will be introduced connected multidimensional max-bisection problem (CMMBP) as
a generalization of multidimensional max-bisection problem. 
As can be seen, the connectivity of subgraphs can be very useful in certain areas of practical and theoretical research. This is the major cause for formulating a new generalization of MBP. 

It can be formulated as follows:    
Let $G=(V,E)$ be an undirected graph, $S\subseteq V$ and $w$ is a function that assigns to 
each edge $e$ an $n$-tuple of positive real numbers $(w_{e1},w_{e2},\ldots,w_{en})$. 
The cut $C(S)$ determined by the set $S$ is defined as \\
$C(S)=\left\{
(i_e,j_e)\in  E|(i_e\in S \land j_e \notin  S) \lor (j_e\in S \land i_e \notin S))
\right\}$. It is obvious that $C(S)=C(V\setminus S)$.
The weight of the cut  
is defined as 
$$w(C(S))=\min\limits_{1\le l\le k} \sum\limits_{(i,j)\in C(S)} w_{ijl}.$$ 
 
The goal of the generalized Max-Bisection problem is to 
find a partition of the set of vertices in two sets with the equal number of vertices where the weight of the cut is maximal, and both partition graphs are connected. This problem will be called Connected Multidimensional Maximum Bisection Problem (CMMBP).

\begin{ex}
\label{ex1}
The connected multidimensional maximum bisection problem can be illustrated by the example given on the  Figure 1, 
which optimal solution is given with the set $S=\{2,3,4\}$.
The set $S$ generates the cut $C(S)=\{(1,2),(1,3),$ $(1,4),$  $(3,5),$ $(3,6),$ $(4,5)\}$
where the sums over coordinates are $(17,16)$ and the weight of the cut is
$16$. The set $S$ and $V\setminus$ are connected with spanning trees $\{(2,3),(3,4)\}$ and $\{(1,6),(5,6)\}$.
\end{ex}
\begin{figure}[h]
	\centering   
		\includegraphics[width=8cm]{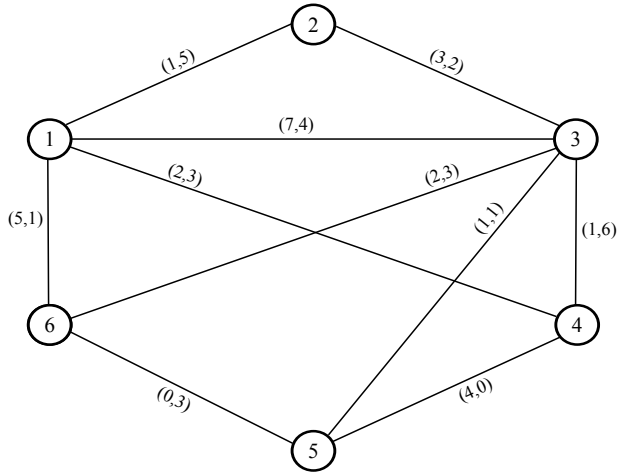}
		\caption{An example of a graph with pairs as weights over the edges}
	\label{fig:example}
\end{figure}

Approaches to solving CMMBP should be considered both from the approaches for solving MBP and those for solving MBCP. As it was presented in \cite{max1} approaches for solving classical MBP are not applicable for various reasons to solving multidimensional maximum bisection problem (MMBP) discussed  in that paper. There it was shown that all three major approaches, approximation algorithms, exact methods and metaheuristics could not be developed without significant modifications into approaches for solving MMBP. As it can be seen CMMBP is special case of MMBP same reasons and considerations presented in \cite{max1} are valid in this case. 

On the other side, graph bisection problems dealing with connectivity are formulated in a form where weights are explicitly linked to vertices and not to edges. Reformulations were weights would be assigned to edges instead of vertices could not adequately address the nature of the problem even if they could be easily executed which is not the case. Furthermore, in formulation weights are vectors instead of numbers, all difficulties mentioned in \cite{max1} concerning this aspect are remaining. This means, that even if there are approaches concerning bisection in connected subgraphs with optimization of objective function dealing with edge weights, such approaches will surely need modifications. Moreover, problems with weight of edges as vectors are more difficult than those where weight are ordinary numbers as can be seen from experimental results presented in \cite{max1}

The solution to the maximum bisection problem can be used in different fields such 
as VLSI design \cite{slov}, image processing \cite{shij},  compiler optimization \cite{hand}, social network analysis etc. 
Connected multidimensional maximum bisection problem appears whenever relation between entities are vectors instead of numbers and the connectivity of subgraphs is essential:

\begin{itemize}
\item 
The proposed problem can be applied in human resource management. One of the most important aspects is compatibility/incompatibility of employees that can be represented by a vector. 
That vector could be character, knowledge, experience, etc where the higher level of incompatibility is represented with greater numbers. 
The employees in that case are represented by vertices and the fact that a certain pair of employees worked together is represented by the edge between those two vertices.  
The problem is to divide the group of employees in two teams with equal size where the greatest part of incompatibility among workers lies between teams.
The connectivity of subgraphs (teams) plays very important factor and that teams are formed by the employees that have worked with each other as much as possible.   

\item The connectivity of electrical components is essential in VLSI design. There are certain aspects that might be considered important such as interference, current used, heat dissipation etc. The proposed problem can be viewed as designation of  electrical components to one of the two boards in such way that both component sets are connected on different boards and maximally differ from each other in the specified number of aspects. For example, the two warmest components should be placed on different board.  
 \end{itemize}

\section{Mixed integer linear programming formulation}

It is well known that is useful to represent problems of the graph theory
as integer programming problems in order to use different well-known exact optimization
techniques. Even more, a mixed integer linear programming model also could be used for developing 
heuristic approaches. Some new development can be seen in \cite{raidl} and \cite{raca}.

In this section it will be introduced a mixed integer linear programming formulation for the connected multidimensional max-bisection problem. The ideas of modeling connectivity of subgraphs follows principles presented in \cite{matic}.

Let $G=(V,E)$ be a graph where $|V|=n$ and let $w_e=(w_{e1},w_{e2},\ldots,w_{ek})$ be a weight vector of an edge $e$. A new vertex $0\notin V$ and new edges $\partial E$ are introduced where 
$\overline V=V\cup\{0\}$, $\overline E = E\cup \partial E$ and $\partial E=\{(0,i)|i\in V\}$.

The goal is to find a partition $(V_1,V_2)$ of the vertex set $V$ where $V_1\cap V_2=\emptyset$, $|V_1|=|V_2|=n/2$ such that $w(C(V_1))$ is maximized and subgraphs induced by $V_1$ and $V_2$ are connected. Let $G_1=(V_1,E_1)$ and  $G_2=(V_2,E_2)$ be such induced graphs. Being connected, they contain spanning trees, 
$T_1=(V_1,E'_1)$ and  $T_2=(V_2,E'_2)$. The edge sets  $E'_1$ and $E'_2$  are extended where 
$\overline{E}'_1={E}'_1\cup\{(0,p)\}$ and $\overline{E}'_1={E}'_1\cup\{(0,q)\}$ for fixed arbitrary vertices $p\in V_1$ and $q\in V_2$. Let graph $T$ and $\overline T$ be defined as $T=(V,E'_1\cup E'_2)$ and 
${\overline T}=(V,{\overline E}'_1\cup {\overline E}'_2)$. 

In order to formulate the MILP model for solving CMMBP, the following variables are introduced:

\begin{equation}
\label{eqx}
x_{i}=\left\{
\begin{array}{rl}
	1, &  i\in V_1\\
	0, & i \in V_2,	
\end{array}
\right.
\end{equation}

\begin{equation}
\label{eqt}
t_e=\left\{
\begin{array}{rl}
	1, & {\rm\ if\ edge\ } e \in C(V_1)\\
	0, & {\rm\ otherwise,}
\end{array}
\right.
\end{equation}

\begin{equation}
\label{eqy}
y_e=\left\{
\begin{array}{rl}
	1, & {\rm\ if\ edge\ } e \in {\overline E}'_1\\
	0, & {\rm\ otherwise,}
\end{array}
\right.
\end{equation}

\begin{equation}
\label{eqz}
z_e=\left\{
\begin{array}{rl}
	1, & {\rm\ if\ edge\ } e \in {\overline E}'_2\\
	0, & {\rm\ otherwise,}
\end{array}
\right.
\end{equation}

\begin{equation}
\label{equ}
u_e\in[-n/2,n/2], \hfill e\in {\overline E}
\end{equation}

Variables $x_i$ determine the vertex set partition, $t_e$ determines edges in the cut needed for calculation of the weight of the cut while  $y_e$ and $z_e$ determines the edges that are in appropriate spanning trees. Variables $u_e$ represents the amount of flow that flows through the certain edge.

The exact solution of the Connected Multidimensional Max-Bisection problem using mixed integer linear programming can be stated as:

\begin{equation}
\label{goal}
\max U
\end{equation}

such that

\begin{equation}
\label{og1}
U \le \sum\limits_{e\in E}w_{el}\cdot t_e,
\ \ \    1\le l\le k
\end{equation}

\begin{equation}
\label{og2}
x_{i_e}+x_{j_e} \ge t_e,
\ \ \    (i_e,j_e)=e\in E
\end{equation}

\begin{equation}
\label{og3}
x_{i_e}+x_{j_e}  + t_e\le 2,
\ \ \    (i_e,j_e)=e\in E
\end{equation}

\begin{equation}
\label{og4}
\sum\limits_{i=1}^{n} x_i = \frac{n}{2},
\end{equation}

\begin{equation}
\label{og5}
y_e \le \frac12 x_{i_e}+\frac12 x_{j_e}, \ \ \ \  (i_e,j_e)=e\in E
\end{equation}

\begin{equation}
\label{og6}
z_e \le 1 - \frac12 x_{i_e} - \frac12 x_{j_e}, \ \ \ \  (i_e,j_e)=e\in E
\end{equation}

\begin{equation}
\label{og7}
y_e \le x_{j_e}, \ \ \ \ \ (i_e,j_e)=e\in \partial E
\end{equation}

\begin{equation}
\label{og8}
z_e \le 1- x_{j_e}, \ \ \ \ \ (i_e,j_e)=e\in \partial E
\end{equation}

\begin{equation}
\label{og9}
u_e \le \frac{n}{2}\cdot y_e + \frac{n}{2}\cdot z_e, \ \ \ \ \ e\in \overline{E}
\end{equation}

\begin{equation}
\label{og10}
u_e \ge -\frac{n}{2}\cdot y_e - \frac{n}{2}\cdot z_e, \ \ \ \ \ e\in \overline{E}
\end{equation}

\begin{equation}
\label{og11}
\sum\limits_{e:j_e=i} u_e - \sum\limits_{e:i_e=i} u_e = 1,  \ \ \ \ i \in V
\end{equation}

\begin{equation}
\label{og12}
\sum\limits_{e:i_e=0} u_e=n
\end{equation}

\begin{equation}
\label{og13}
\sum\limits_{e\in E} y_e  = \frac{n}{2}-1
\end{equation}

\begin{equation}
\label{og14}
\sum\limits_{e\in E} z_e = \frac{n}{2}-1
\end{equation}

\begin{equation}
\label{og15}
\sum\limits_{e\in \partial E} y_e + \sum\limits_{e\in \partial E} z_e = 2
\end{equation}

\begin{equation}
\label{og16}
x_i, t_e, y_e, z_e \in \{0,1\}, \ \ \ \ i\in V, e\in \overline{E}
\end{equation}

\begin{equation}
\label{og17}
u_e \in [-n/2,n/2], \ \ \ \ e\in \overline{E}
\end{equation}

By constraint (\ref{og1}) weight of the cut is determined. Constraints (\ref{og2}) and (\ref{og3}) determines that appropriate edges are in the cut. Constraint (\ref{og4}) ensures that partitions $V_1$ and $V_2$ have equal number of vertices. Constraints (\ref{og5})-(\ref{og14}) ensures that induced subgraphs are connected.  Constraints (\ref{og5}) and (\ref{og6}) determines if the edge is in the spanning trees $T_1$ or $T_2$ or not.
Constraints (\ref{og7}) and (\ref{og8}) ensures that that there is only one edge from additional vertex $0$ to the certain vertex of each spanning tree.  Constraints (\ref{og9}) and (\ref{og10}) ensure that $u_e=0$ for edges $e$ that doesn't belong to the overall spanning tree $\overline T$. The constraint (\ref{og11}) represents network preservation principle. The constraint (\ref{og12}) ensures that there is enough initial flow to reach to all vertices of the graph. The constraints (\ref{og13}) and (\ref{og14}) ensures that there are $n/2-1$ edges in both spanning trees, while the constraint (\ref{og15}) ensures that there exactly two edges emanating from the additional vertex $0$.

It should be noted that constraints (\ref{og5})-(\ref{og8}), (\ref{og11}), (\ref{og12}) and (\ref{og15}) are the same as in formulation for maximally balanced connected partition problem
presented in \cite{matic}. 
Using the idea from Mati\'{c} in \cite{matic} for ensuring  connectivity of partitions the following lemma where appropriate constraints are modified in order to obtain partitions with equal number of vertices.

\begin{lemma}
From constraints (\ref{og5})-(\ref{og15}) it follows that in optimal solution the vertices $p\in V_1$ and $q\in V_2$, such that $u_{(0,p)}=|V_1|$, $u_{(0,q)}=|V_2|$ and $u_{(0,i)}=0$ for $i\ne p,q$ exist.
\end{lemma}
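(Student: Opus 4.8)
\noindent\textit{Proof proposal.} First I would observe that optimality is not really needed: the conclusion holds for every point satisfying \eqref{og4}, \eqref{og5}--\eqref{og15} and the integrality \eqref{og16} (assuming, as always, $n\ge 2$). Throughout write $V_1=\{i\in V:x_i=1\}$ and $V_2=\{i\in V:x_i=0\}$, so $|V_1|=|V_2|=n/2$ by \eqref{og4}. The plan is the standard single-commodity flow argument: show that only the edges of the two extended spanning trees can carry flow, then sum the conservation law \eqref{og11} over one side of the partition so that the interior edges telescope away, and finally read off from what survives that a single source edge must enter each side and carry the whole demand $n/2$.

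The first step would be a classification of edges. From \eqref{og5}--\eqref{og6}, for $e=(i_e,j_e)\in E$: if $i_e,j_e\in V_1$ the right-hand side of \eqref{og6} vanishes, so $z_e=0$; symmetrically $y_e=0$ if $i_e,j_e\in V_2$; and if $e$ is a cut edge both \eqref{og5} and \eqref{og6} have right-hand side $\tfrac12$, forcing $y_e=z_e=0$ by integrality. From \eqref{og7}--\eqref{og8}, for $e=(0,i)\in\partial E$ one gets $y_e+z_e\le x_i+(1-x_i)=1$, with $y_e=1\Rightarrow i\in V_1$ and $z_e=1\Rightarrow i\in V_2$. Since each $y_e+z_e\in\{0,1\}$ on $\partial E$, constraint \eqref{og15} then forces exactly two \emph{active} edges of $\partial E$ (those with $y_e+z_e=1$), with $y_e=z_e=0$ on all remaining edges of $\overline E$ not already killed. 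Finally \eqref{og9}--\eqref{og10} yield $u_e=0$ on every edge with $y_e=z_e=0$; in particular every cut edge and every inactive $\partial E$ edge is flow-free, an active $\partial E$ edge into $V_1$ has $y_e=1$, one into $V_2$ has $z_e=1$, and any active edge satisfies $u_e\le\tfrac n2$.

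Next I would sum \eqref{og11} over $i\in V_1$. An edge $e\in E$ with both endpoints in $V_1$ contributes $+u_e$ as an in-edge at $j_e$ and $-u_e$ as an out-edge at $i_e$, hence nothing; a cut edge contributes $\pm u_e=0$; an edge $(0,i)$ with $i\in V_1$ contributes $+u_{(0,i)}$. Therefore $\sum_{e\in\partial E:\,j_e\in V_1}u_e=|V_1|=\tfrac n2$, and the identical computation over $V_2$ gives $\sum_{e\in\partial E:\,j_e\in V_2}u_e=|V_2|=\tfrac n2$. Letting $a$ and $b$ count the active $\partial E$ edges into $V_1$ and into $V_2$ respectively, we have $a+b=2$, each displayed sum ranges over exactly these $a$ resp. $b$ edges (the rest being flow-free), and each such edge has $u_e\le\tfrac n2$. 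If $a=0$ the first sum is $0\ne\tfrac n2$, and if $a=2$ then $b=0$ and the second sum is $0\ne\tfrac n2$; so $a=b=1$. Calling the unique active edge into $V_1$ by $(0,p)$ and that into $V_2$ by $(0,q)$, the two identities become $u_{(0,p)}=|V_1|$ and $u_{(0,q)}=|V_2|$, while $u_{(0,i)}=0$ for every $i\ne p,q$ since $(0,i)$ is then inactive. That is exactly the assertion.

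The step I expect to be the main obstacle is the careful bookkeeping linking the end of the classification to the cancellation: one must be certain that the \emph{only} edges of $\overline E$ incident to a vertex of $V_1$ that can carry nonzero flow are the interior $E$ edges of $V_1$ (which telescope) and the $\partial E$ edges pointing into $V_1$, so that the conservation sum collapses precisely to $\sum_{e\in\partial E:\,j_e\in V_1}u_e=|V_1|$. Everything else is elementary counting using only the integrality of $x,y,z$ and the coefficient bound in \eqref{og9}; in particular \eqref{og1}--\eqref{og3}, \eqref{og12} and \eqref{og13}--\eqref{og14} play no role in this lemma (the last two only re-confirm $a=b=1$ from the $E$-side count).
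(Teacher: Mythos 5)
Your proposal is correct and follows essentially the same route as the paper's proof: classify the edges via (\ref{og5})--(\ref{og10}) to show only spanning-tree edges carry flow, use (\ref{og15}) to get exactly two active edges out of vertex $0$, and sum the conservation constraints (\ref{og11}) over one side of the partition so that interior edges telescope. Your write-up is in fact slightly tighter than the paper's (you handle all cases $a\in\{0,1,2\}$ rather than only the ``both in $V_1$'' contradiction, and you correctly observe that neither optimality nor (\ref{og12}) is needed, while (\ref{og4}) is), but the underlying argument is the same.
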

\begin{proof}
Let $e$ be an arbitrary edge from $\overline E$. The edge $e$ may or may not be included in the spanning tree.
In the first case, either $y_e$ or $z_e$ are equal to 1, otherwise $y_e$ and $z_e$ are both equal to $0$. By the constraints (\ref{og5}), values $y$ are bounded by the right hand side of the inequalities by the values $x_{i_e}$ and $x_{j_e}$ (similarly, the constraints (\ref{og6}) bound the variables $z$). For example, if both $i_e$ and $j_e$ belongs to $V_1$, both $x_{i_e}$ and $x_{j_e}$ at the same time are equal to 1, and in that case, constraint (\ref{og5}) allows that the edge $e$ can be included in the spanning tree. Being binary variables, there are  four 
possibilities for  $x_{i_e}$ and $x_{j_e}$ :

\begin{enumerate}
	\item[(i)] $x_{i_e}=1$ and $x_{j_e}=1$: That means that both vertices belong to $V_1$;
	\item[(ii)] $x_{i_e}=1$ and $x_{j_e}=0$: the vertex $i$ belong to $V_1$ and the vertex $j$ belong to $V_2$;
	\item[(iii)] $x_{i_e}=0$ and $x_{j_e}=1$: the vertex $i$ belong to $V_2$ and the vertex $j$ belong to $V_1$;
	\item[(iv)] $x_{i_e}=0$ and $x_{j_e}=0$:  both vertices belong to $V_2$;
\end{enumerate}

Only in the cases (i) and (iv), there exist the possibility that the edge $e$ is included in the spanning tree and in cases (ii) and (iii), constraints (\ref{og5}) and (\ref{og6}) guarantee that $e$ will not be included, and because of (\ref{og9}) and (\ref{og10}), $u_e$ is equal to 0.

From (\ref{og15}), it directly follows that exactly two edges $(0,p)$ and $(0,q)$ exist. These edges belong to $\partial E$ and for them $y_e=1$ or $z_e=1$. It will be shown that $p$ and $q$ have to belong to different subsets $V_1$ and $V_2$. Suppose that, without loss of generality, both $p$ and $q$ belong to $V_1$. From (\ref{og12}) it follows that $u_{(0,p)}+u_{(0,q)}=n$.

Let us sum the constraints from (\ref{og11}), when $i\in V_1$.

\begin{eqnarray*}
|V_1| & = & \sum\limits_{i\in V_1}\left(\sum\limits_{e:j_e=i}u_e-\sum\limits_{e:i_e=i}u_e \right) = \sum\limits_{e:j_e\in V_1}u_e-\sum\limits_{e:i_e\in V_1}u_e = \\
      & = & \sum\limits_{e:i_e\in V_1\land j_e\in V_1}u_e + \sum\limits_{e:i_e=0\land j_e\in V_1}u_e+\sum\limits_{e:i_e\in V_2\land j_e\in V_1}u_e - \\
			& & \left( \sum\limits_{e:i_e\in V_1\land j_e\in V_1}u_e + \sum\limits_{e:i_e\in V_1\land j_e\in V_2}u_e\right) \\
			& = & \sum\limits_{e:i_e=0\land j_e\in V_1}u_e=u_{(0,p)}+u_{(0,q)}=n
\label{eq:}
\end{eqnarray*}

Because the first and the fourth sum are annihilated, while the third and the last sum have all members equal to zero as it is shown above. So, the expression $|V_1|=n$ is obtained which is in contradiction with the fact $|V_1|=\frac{n}{2}$ that follows form the constraint (\ref{og4}). Therefore, $p$ and $q$ must belong to different subsets $V_1$ and $V_2$. 
\end{proof}

\begin{theorem}
A partition of the set of vertices $V'=(V'_1,V'_2)$ is the solution of the CMMBP if and only if constraints (\ref{og1})-(\ref{og17}) and objective function are satisfied for the variables $(x,t,y,z,u)$.
\end{theorem}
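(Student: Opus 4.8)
The plan is to prove the equivalence in two directions, exploiting the fact that the connectivity-related constraints (\ref{og5})--(\ref{og15}) have already been analyzed in the Lemma, so the new content concerns the cut variables $t_e$, the balance constraint (\ref{og4}), and the objective.

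\textbf{($\Rightarrow$) Feasibility of an optimal partition.} Suppose $V'=(V'_1,V'_2)$ is an optimal solution of the CMMBP. First I would set $x_i=1$ for $i\in V'_1$ and $x_i=0$ for $i\in V'_2$; then (\ref{og4}) holds because $|V'_1|=|V'_2|=n/2$ by definition of a bisection. Next I would define $t_e$ to be the indicator of $e\in C(V'_1)$; then for each $e=(i_e,j_e)\in E$, exactly one of cases (i)--(iv) from the Lemma occurs, and one checks directly that (\ref{og2}) reads $x_{i_e}+x_{j_e}\ge t_e$ and (\ref{og3}) reads $x_{i_e}+x_{j_e}+t_e\le 2$ are both satisfied in each of the four cases (the pair $(x_{i_e},x_{j_e},t_e)$ being $(1,1,0),(1,0,1),(0,1,1),(0,0,0)$). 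Since $G_1$ and $G_2$ are connected, pick spanning trees $T_1,T_2$, fix $p\in V'_1$, $q\in V'_2$, extend to $\overline E'_1,\overline E'_2$, and set $y_e,z_e$ to be the corresponding indicators; constraints (\ref{og5})--(\ref{og8}), (\ref{og13})--(\ref{og15}) are then immediate from the tree structure and the cardinalities $|E'_1|=|E'_2|=n/2-1$. Finally, since $\overline T$ is a forest with two components rooted through $0$ at $p$ and $q$, one can route a flow: send $n/2$ units of flow out of $0$ into each of $p$ and $q$ along the tree edges so that each vertex of $V$ absorbs exactly one unit; this defines $u_e$ on tree edges, set $u_e=0$ off the tree, which satisfies (\ref{og9})--(\ref{og12}) and (\ref{og17}) provided $|u_e|\le n/2$ on tree edges --- true since each subtree below an edge contains at most $n/2$ vertices. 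Setting $U=\min_{1\le l\le k}\sum_{e\in E}w_{el}t_e=w(C(V'_1))$ makes (\ref{og1}) tight for the minimizing coordinate and satisfied for the others, and this is the maximal value of the objective because any feasible $(x,t,y,z,u,U)$ yields a feasible CMMBP partition with cut weight at least $U$ (see the other direction).

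\textbf{($\Leftarrow$) A feasible point yields an optimal partition.} Conversely, suppose $(x,t,y,z,u,U)$ satisfies (\ref{og1})--(\ref{og17}) and maximizes $U$. Define $V_1=\{i: x_i=1\}$, $V_2=\{i:x_i=0\}$; by (\ref{og4}) this is a bisection. The Lemma gives the existence of $p\in V_1$, $q\in V_2$ with $u_{(0,p)}=n/2$, $u_{(0,q)}=n/2$ and $u_{(0,i)}=0$ otherwise, and by cases (ii)--(iii) of the Lemma's argument together with (\ref{og9})--(\ref{og10}) every cross edge carries zero flow, so the subgraph of edges with $y_e+z_e=1$ decomposes into the part inside $V_1$ (with $n/2-1$ edges by (\ref{og13})) and the part inside $V_2$ (with $n/2-1$ edges by (\ref{og14})). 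I would then argue each part is a spanning tree of its side: a connected graph on $n/2$ vertices with $n/2-1$ edges would do it, so it suffices to rule out cycles; here I would invoke the flow conservation (\ref{og11}) with the single source feeding $p$ (resp.\ $q$) exactly $n/2$ units --- if the $V_1$-part had a cycle it would be disconnected (having too few edges for its vertex count), and then some vertex component of $V_1$ would receive no flow from $0$, contradicting (\ref{og11}) summed over that component (exactly the annihilation computation in the Lemma's proof, localized to the component). Hence $G[V_1]$ and $G[V_2]$ are connected. It remains to identify $U$ with the cut weight: constraints (\ref{og2})--(\ref{og3}) force $t_e=1$ exactly when $e\in C(V_1)$ (cases (ii),(iii) force $t_e\ge 1$ hence $t_e=1$; cases (i),(iv) force $t_e\le 0$ hence $t_e=0$), so $\sum_{e\in E}w_{el}t_e=\sum_{e\in C(V_1)}w_{el}$ and (\ref{og1}) together with maximization of $U$ gives $U=\min_l\sum_{e\in C(V_1)}w_{el}=w(C(V_1))$. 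Maximality of $U$ over the feasible region then matches maximality of $w(C(V_1))$ over connected bisections, because the forward direction shows every connected bisection is representable with objective value equal to its cut weight.

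\textbf{Main obstacle.} The routine parts are the four-case verifications for (\ref{og2})--(\ref{og3}) and (\ref{og5})--(\ref{og6}), and the bookkeeping of edge counts. The delicate point is the argument that $n/2-1$ edges inside a side with $n/2$ vertices, together with the flow constraints, actually force a \emph{spanning} tree rather than a forest-plus-isolated-vertex: one must show no vertex of $V_1$ is left out, and the cleanest way is the localized flow-annihilation argument from the Lemma applied to a hypothetical connected component of $G[V_1]$ not containing $p$ --- summing (\ref{og11}) over such a component gives $0$ on the right (no edge from $0$, no edge to $V_2$ carries flow) but a positive count of vertices on the left, a contradiction. I would make sure to state this component argument explicitly, since it is the one place where the correctness of the connectivity model is genuinely used rather than just quoted.
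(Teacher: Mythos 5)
Your proof follows essentially the same route as the paper in both directions: constructing the variables from a bisection via spanning trees and subtree-size flows for ($\Rightarrow$), and establishing connectivity in ($\Leftarrow$) by summing the flow-conservation constraints (\ref{og11}) over a subset of $V_1$ and observing that the boundary terms cannot annihilate unless a tree edge crosses it (the paper does this for an arbitrary bipartition $S'\cup S''=V_1$; your version, localized to a component missing $p$, is the same computation).

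There is one concrete error in your ($\Leftarrow$) step for identifying $U$ with the cut weight. You claim that for a cross edge (cases (ii),(iii)) constraints (\ref{og2})--(\ref{og3}) ``force $t_e\ge 1$ hence $t_e=1$.'' They do not: with $x_{i_e}+x_{j_e}=1$, constraint (\ref{og2}) reads $1\ge t_e$ and constraint (\ref{og3}) reads $t_e\le 1$, so both merely \emph{allow} $t_e=1$; only the non-cut cases (i),(iv) actually pin $t_e$ down (to $0$). What forces $t_e=1$ on cut edges is optimality of the objective together with the strict positivity of the weights $w_{el}$: raising $t_e$ from $0$ to $1$ on a cut edge increases every coordinate sum in (\ref{og1}) and hence cannot decrease the attainable $U$, so an optimal solution may be assumed to have $t_e=1$ on all of $C(V_1)$, giving $U=w(C(V_1))$. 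This is exactly how the paper argues that step, and you do gesture at ``maximization of $U$'' later in the same sentence, but the mechanism you attribute the forcing to is wrong and the argument as written would not rule out a feasible point with $t_e=0$ on some cut edge. The rest of the proposal, including the delicate spanning-tree/component argument you single out, is sound and matches the paper.
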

\begin{proof}
($\Rightarrow$) 
Based on the bisection $V'$, the variables $(x,t,y,z,u)$ will be constructed and it will be shown that objective function and constraints  (\ref{og1})-(\ref{og17})  are satisfied. Let the variables be defined according to (\ref{eqx})-(\ref{eqz}). It is obvious that (\ref{og16}) is satisfied. 
Variables $u_e, e\in \overline E$ are defined as follows.
For $e\in \partial E$, $u_e$ is defined as

\begin{eqnarray}
u_e=\begin{cases} |V'_1| & e_j=p\\ |V'_2| & e_j= q\\0 & \mbox{otherwise} \end{cases} \label{equovi}
\label{eq3u}
\end{eqnarray}

Since $|V'_1|=\frac{n}{2}$ and $|V'_2|=\frac{n}{2}$, then $u_e\in [-\frac{n}{2},\frac{n}{2}]$  for all $e\in \partial E$.

To determine the values $u_e$ for all other edges $e\in \overline{E}\setminus \partial E$, the fact that in a tree each vertex can be declared as a root is used and all other vertices can be ordered, by some search algorithm. For example, one of the two standard search algorithms can be used for this purpose: depth-first search or breadth-first search. 
Let $E'_1$ and $E'_2$ be the spanning trees for $V'_1$ and $V'_2$ generated by the search algorithm. Using the ordering formed by the search algorithm, in a case when the vertex $i_e$ is a parent of the vertex $j_e$, the value of the flow is defined as the number of vertices in subtree, rooted by the vertex $j_e$. In a case when the vertex $j_e$ is a parent of the vertex $i_e$, the value of the flow $u_{i_e,j_e}$ is defined as the minus number of the vertices in the subtree, rooted by the vertex $i_e$. For all the other edges, belonging neither to $E'_1$ nor $E'_2$, $u_e=0$ is proposed. Since that the number of vertices in each subtree is less or equal to $\frac{n}{2}$, then $u_e\in[-\frac{n}{2},\frac{n}{2}]$ for each edge $e\in \overline{E}\setminus \partial E$. Therefore, (\ref{og17}) holds for each $e\in \overline{E}$.

Let us prove that constraints (\ref{og1})-(\ref{og15}) are satisfied.

Based on the definition of weight of the cut, the constraint ($\ref{og1}$) is true, and based on the goal of CMMBP the  (\ref{goal}) also holds.

If $t_e=0$  than ($\ref{og2}$) and ($\ref{og3}$) are obviously true.
If $t_e=1$  than the corresponding edge $e=(i,j)$ belongs to the cut, and exactly one vertex incident to the edge $e$ must be in the set $V_1$, 
so either $x_i=1$ or $x_j=1$ and therefore constraints ($\ref{og2}$) and ($\ref{og3}$) holds.

The constraint ($\ref{og4}$) is obviously fulfilled as it is required that the vertex set is partitioned into two sets with the equal number of vertices.

To prove inequalities  (\ref{og5}), two cases are considered $y_e=0$ and $y_e=1$.

(i) $y_e=0$. The inequalities (\ref{og5}) are satisfied because $x_{i_e}$,$x_{j_e}\ge0$ by the definition. 

(ii) $y_e=1$ implies that $e\in \overline{E}'_1\Rightarrow x_{i_e}=x_{j_e}=1\Rightarrow y_e\le \frac12 x_{i_e}+\frac12 x_{j_e}$.

To prove inequalities   (\ref{og7}), two cases are considered again:

(i) $y_e=0$. The inequalities (\ref{og7}) holds because $x_{j_e}\ge0$ by the definition. 

(ii) $y_e=1$ implies that $e\in \overline{E}'_1$. Since $e\in \partial E$, $e\in \overline{E}'_1\cap\partial E={(0,p)}$, which implies that $j_e=p\in V_1\Rightarrow x_{j_e}=1$, and therefore inequalities (\ref{og7}) holds. 

The inequalities (\ref{og6}) and (\ref{og8}) are prove analogously as the inequalities (\ref{og5}) and (\ref{og7}).

To prove inequalities (\ref{og9}) and (\ref{og10}) two cases are considered:

(i) $e\in \overline{E}'_1\cup \overline{E}'_2$. Right hand sizes of the inequalities (\ref{og9}) and (\ref{og10}) are equal to $\frac{n}{2}$ and $-\frac{n}{2}$ respectively. Since that $u_e \in [-\frac{n}{2},\frac{n}{2}]$, the inequalities (\ref{og9}) and (\ref{og10}) are satisfied. 

(ii) $e\notin \overline{E}'_1\cup \overline{E}'_2$. Then $u_e$ is equal to 0 by the definition. The inequalities (\ref{og9}) and (\ref{og10}) are satisfied because right hand sizes of (\ref{og9}) are non negative, and right hand sizes of (\ref{og10}) are non positive.

To prove inequality (\ref{og11}), without loss of generality, suppose that $i\in V_1$. Let us consider all the edges from $\overline{E}'_1$, starting or ending by the vertex $i$. The one edge of these comes from from the parent vertex, and all the others are successors. For all the other edges incident with $i$ and not belonging to $\overline{E}'_1$, the values $u_e$ are equal to 0
and they have no influence to (\ref{og11}).

$$
\sum\limits_{e:j_e=i}u_e-\sum\limits_{e:i_e=i}u_e   = \sum\limits_{e:j_e=i\land u_e>0}u_e + \sum\limits_{e:j_e=i\land u_e<0}u_e - 
\sum\limits_{e:i_e=i\land u_e>0}u_e - 
 \sum\limits_{e:i_e=i\land u_e<0}u_e =  $$
$$ 
 =  \sum\limits_{e:j_e=i\land u_e>0}|u_e| + \sum\limits_{e:j_e=i\land u_e<0}-|u_e| - 
\sum\limits_{e:i_e=i\land u_e>0}|u_e| - \sum\limits_{e:i_e=i\land u_e<0}-|u_e| =
$$

$$
 =  \sum\limits_{e:j_e=i\land u_e>0} |u_e| + \sum\limits_{e:i_e=i\land u_e<0} |u_e| - 
\left( \sum\limits_{e:j_e=i\land u_e<0} |u_e| + \sum\limits_{e:i_e=i\land u_e>0} |u_e|\right).
$$

In the first two sums, there is only one edge satisfying the conditions and that edge is coming from the parent of the vertex $i$ to the vertex $i$. For that edge, $|u_e|$ is equal to the number of the vertices in the subtree rooted by $i$. In the last two sums, all the edges coming from the vertex $i$ to all successors in spanning tree $T_1$, participate. The total sum of $|u_e|$ for those edges is equal to the total number of vertices of subtrees rooted by each successor of vertex $i$. Since only vertex $i$ participate in the subtree rooted by itself and not rooted by its successors, the value $\sum_{e:j_e=i\land u_e>0} |u_e| + \sum_{e:i_e=i\land u_e<0} |u_e| - 
\left( \sum_{e:j_e=i\land u_e<0} |u_e| + \sum_{e:i_e=i\land u_e>0} |u_e|\right) =1$

The definition  (\ref{equovi}) of $u_e, e\in \partial E$ is used to prove equality  (\ref{og12}):

$\sum\limits_{e:i_e=0}u_e = u(0,p)+u(0,q)=|v'_1|+|v'_2|=n$.

Since $T_1$ and $T_2$ are the spanning trees of $G_1$ and $G_2$, then $\sum_{e\in E} y_e = |\overline{E}'_1|=|V_1|-1$ and $\sum_{e\in E} z_e = |\overline{E}'_2|=|V_2|-1$. That implies that constraints
(\ref{og13}) and (\ref{og14}) are proved.

For $e\in \partial E$, $y_e=1$ for only one $e$, and that is the case when $e=(0,p)$. For all other edges $e\in \partial E, y_e=0$, which implies $\sum_{e\in\partial E} y_e=1$. Similarly,  $\sum_{e\in\partial E} z_e=1$, implying that constraint (\ref{og15}) is satisfied.

($\Leftarrow$)  
Suppose that optimal solution $(x^*,t^*,y^*,z^*,u^*)$ satisfies the conditions  (\ref{og1})-(\ref{og17}) and the objective function. The bisection $(V_1,V_2)$, that represents the solution of the CMMBP, will be constructed.

Let us define 

$V_1 = \{i\in V | x_i = 1 \}$, $\overline{E}_1 = \{ e\in E | y_e =1 \}$,

$V_2 = \{i\in V | x_i = 0 \}$, $\overline{E}_2 = \{ e\in E | z_e =1 \}$ and 

$C(V_1)=\{e\in E|t_e=1\}$.

The set $C(V_1)$ represents the cut that is generated by the set of vertices $V_1$. 

From the constraint ($\ref{og1}$) it follows that 
$$U\le\min\limits_{1\le l\le k} \sum\limits_{\substack{\{i,j\}\in E\\ i\in S, j\notin S}} w_{ijl},$$
meaning that $U\le w(C(S))$ and it follows from the objective function that $U$ is equal to the greatest weight of the cut.

From the constraint ($\ref{og16}$) $t_e$ is either 0 or 1.

If $t_e=1$ then from the constraints ($\ref{og2}$)  and ($\ref{og3}$) it follows that both vertices of the edge $e$ are not in the same set $V_1$ nor set $V_2$.  

If $t_e=0$ then from the constraints ($\ref{goal}$)-($\ref{og3}$) it follows that both vertices of the edge $e$ must be in the same partition set (either $V_1$ or $V_2$). If vertices are in different partitions, than it can be concluded that the weight of the edge $e$ is not included in the weight of the cut, and therefore $U$ is not maximal which contradicts to the supposition that all constraints are fulfilled. From this it follows that vertices of the edge must be in the same partition.  

From the constraint ($\ref{og4}$) it follows that $|V_1|=n/2=|V_2|$ which means 
that the vertex set is partitioned into two sets with the equal number of vertices.

Constraints (\ref{og5}) - (\ref{og8}) ensures that the sets $\overline{E}_1$ and $\overline{E}_2$ are well defined, i.e. all edges from $\overline{E}_1$ have endpoints from $V_1\cup\{0\}$, and all edges from $\overline{E}_2$ have endpoints from $V_2\cup\{0\}$:

(i) $e\in E'_1$ implies $y_e=1$. From the constraint (\ref{og5}) and the binary nature of the variables $x_{i_e}$ and $x_{j_e}$, it follows that $x_{i_e}=1$ and $x_{j_e}=1$, which implies that $i_e, j_e \in V_1$.

(ii) $e\in \partial E \cap \overline{E}'_1$  also implies $y_e=1$. From the constraints (\ref{og7}) $x_{j_e}=1$ follows, which implies that $j_e\in V_1$.

Similarly, constraints (\ref{og6}) and (\ref{og8}) ensure that all edges from $\overline{E}'_2$ have endpoints from $V_2\cup \{0\}$. Constraints (\ref{og13}) and (\ref{og14}) ensures that the total number of edges included in each spanning tree is exactly $n/2-1$.

From inequalities (\ref{og9}) and (\ref{og10}), it follows that $u_e=0$, for all $e\notin \overline{E}'_1 \cup \overline{E}'_2$.

The connectivity of the graph $(V_1,E'_1)$ will now be proved, and the connectivity of the graph $(V_2,E'_2)$ can be proven analogously.  

Suppose that $S'$ and $S''$ are arbitrary subsets of $V_1$, such that $S'\cup S''=V_1$ and $S'\cap S''=\emptyset$, $S',S''\ne \emptyset$. Then $(\exists e\in E'_1) i_e\in S' \land j_e\in S''$ will be proved.

Let us summarize the constraints given in (\ref{og11}), for all $i\in S'$. We get

$\sum\limits_{i\in S'}\left(\sum\limits_{e:j_e\in S'}u_e - \sum\limits_{e:i_e\in S'}u_e \right)=|S'|$.

If the sum from the left side is disassembled, the following expression is gotten:

$$
\sum\limits_{i\in S'}\left(\sum\limits_{e:j_e\in S'}u_e - \sum\limits_{e:i_e\in S'}u_e \right)  = $$
$$
\sum\limits_{e:j_e\in S'\land i_e\in S'}u_e + \sum\limits_{e:j_e\in S'\land i_e\in S''}u_e + 
\sum\limits_{e:j_e\in S'\land i_e=0}u_e + \sum\limits_{e:j_e\in S'\land i_e\in V_2}u_e  - $$
$$
 -  \left( \sum\limits_{e:i_e\in S'\land j_e\in S'}u_e + \sum\limits_{e:i_e\in S'\land j_e\in S''}u_e
+ \sum\limits_{e:i_e\in S'\land j_e\in V_2}u_e\right).
$$

Let us denote summands in the last equation as A, B, C, D, E, F and G. Then the equation ca be written as 

$\sum\limits_{i\in S'}\left(\sum\limits_{e:j_e\in S'}u_e - \sum\limits_{e:i_e\in S'}u_e \right) = A + B + C+ D - (E + F + G)$.

It is obvious that $A=E$ and $D=G=0$ (between $V_1$ and $V_2$ there are no edges in $E'$). Further, $C=0$ or $C=|V_1|$
depending on whether $p\in S'$ or $p\notin S'$, because the Lemma 1 proposes that exactly one vertex ($p$ in this case) from $V_1$  is connected to the vertex $0$. Thus we have

$\sum\limits_{e:j_e\in S'\land i_e\in S''}u_e \ne \sum\limits_{e:i_e\in S'\land j_e\in S''}u_e$.

The last inequality implies $(\exists e)(((i_e\in S')\land (j_e\in S''))\lor((i_e\in S'')\land (j_e\in S'))) u_e \ne 0$. From $u_e\ne 0 \Rightarrow y_e=1 \Rightarrow e\in E'_1$ and the edge connecting $S'$ and $S''$ is found. Thus, the component $(V_1,E'_1)$ is connected. 

Therefore, constructed partition is connected and the weight of the cut is maximal, ie. the partition represents the solution of CMMBP for the graph $G$.
\end{proof}

In order to illustrate proposed mathematical formulation, 
the following example contains the values of all variables. 

\begin{ex}
\label{ex2}
Let us consider the same graph as in Example \ref{ex1}. Optimal solution value is 16, with set $S=\{2,3,4\}$, where the cut is \\
$C(S)=\{(1,2),(1,3),(1,4),(3,5),(3,6),(4,5)\}$. 
The nonzero values of the variables are as follows: 
$x_1=x_5=x_6=1$, 
$t_{(1,2)}=t_{(1,3)}=t_{(1,4)}=t_{(3,5)}=t_{(3,6)}=t_{(4,5)}=1$,
$y_{(0,1)}=y_{(1,6)}=y_{(5,6)}=1$,
$z_{(0,3)}=z_{(2,3)}=z_{(3,4)}=1$,
$u_{(0,1)}=3, u_{(1,6)}=2, u_{(5,6)}=-1$, 
$u_{(0,3)}=3, u_{(2,3)}=-1, u_{(3,4)}=1$.
\end{ex}

\section{Conclusions}

This paper has taken into consideration a generalization of Max-Bisection problem where
weights on the edges are $n$-tuples and the partition sets induces connected subgraphs. A mixed integer linear programming formulation is introduced with proof of its correctness.

In future work it may be useful to develop a metaheuristic approach for solving CMMBP. 
The second direction could be 
taking  into consideration $n$-tuples as weights in several 
related problems, such as Max-Cut, Max $k$-Cut, Max $k$-Vertex Cover, etc.

\end{document}